\definecolor{DarkGreen}{rgb}{0.2,0.6,0.2}
\def\eps{\varepsilon}
\def\Om{\Omega}
\numberwithin{equation}{section}
\def\Ind#1{{\mathbbmss 1}_{_{\scriptstyle #1}}}
\def\ignore#1{}
\def\bR{{\mathbb R}}
\def\cF{{\mathscr F}}
\def\cV{{\mathcal V}}
\def\<{\langle}\def\>{\rangle}
\newtheorem{theorem}{Theorem}[section]
\newtheorem{proposition}[theorem]{Proposition}
\theoremstyle{definition}
\newtheorem{remark}[theorem]{Remark}
\title{Protecting Pegged Currency Markets\\ from Speculative Investors}
\author{ 
\normalsize Eyal Neuman\thanks{Department of Mathematics, Imperial College London, SW7 2AZ  London, United Kingdom}
 \and\setcounter{footnote}{6} \normalsize Alexander Schied\thanks{
		Department of Statistics and Actuarial Science, University of Waterloo. E-mail: {\tt alex.schied@gmail.com}. This author gratefully acknowledges financial support from the
 Natural Sciences and Engineering Research Council of Canada (NSERC) through grant RGPIN-2017-04054.
	}}
\date{\normalsize February 16, 2020}
\begin{document}

\maketitle
\vspace{-0.5cm}

\begin{abstract}
We consider a stochastic game between a trader and a central bank in a target zone market with a lower currency peg. This currency peg is maintained by the central bank through the generation of permanent price impact, thereby aggregating an ever increasing risky position in foreign reserves. We describe this situation mathematically by means of two coupled singular control problems, where the common singularity arises from a local time along a random curve. Our first result identifies a certain local time as that central bank strategy for which this risk position is minimized. We then consider the worst-case situation the central bank may face by identifying that strategy of the strategic investor that maximizes the expected inventory of the central bank under a cost criterion, thus establishing a Stackelberg equilibrium in our model.
\end{abstract}

\noindent{\it  Mathematics Subject Classification 2010:} 93E20,  91G80, 60J55 
\bigskip

\noindent{\it Key words:} currency peg, market impact, target zone models, singular stochastic control,  local time, Stackelberg equilibrium

\section{Introduction} \label{sec-intro} 
In September 16 of 1992, a day which is known as  Black Wednesday, the British government was forced to step out of the European Exchange Rate Mechanism (ERM). Shortly before this crucial date, the British pound exchange rate was close to its lower limit, and currency market speculators were trying to take advantage of this opportunity to drive down the exchange rate by making immense short orders. Among the other speculators stood out George Soros who shorted $10$ billion pounds within a single day. The British treasury kept spending its foreign currency during that day, in order to buy British pounds, which were becoming less valuable, as the exchange rate was rapidly decreasing. The Guardian  wrote  that \lq\lq 40 per cent of Britain's foreign exchange reserves were spent in frenetic trading" \cite{Guardian}. Also an announcement to raise interest rates to 15\% did not help. By the end of the day, the British government could not keep up with the purchases and announced  an exit from the ERM. As a result, the exchange rate dropped substantially (see Figure \ref{GBP-DEM Fig}) and the short positions of Soros and the speculators  surged in value.  In one day, George Soros made over $1$ billion Sterling. On the other hand, the cost for the British taxpayers was estimated at around 3.3 billion Sterling.

A similar situation occurred during the EUR/CHF currency peg, which was maintained by the Swiss National Bank (SNB) from September 6, 2011 to January 15, 2015. Here, however,  the roles between the foreign and domestic currencies were reversed, as euro investors scared by the European sovereign debt crisis were seeking a safe haven in the Swiss franc, thus driving up its value in comparison with the euro. In maintaining the currency peg, the SNB accumulated an ever increasing inventory, which in January 2015 stood at an amount equal to about 70\% of the Swiss GDP; see Figure \ref{EURCHF fig}. When the SNB finally abandoned the currency peg, the Swiss franc appreciated dramatically, and the FX inventory incurred a loss equal to about 12\% of the Swiss GDP \cite{LleoZiemba}.

The goal of this work is to describe such situations mathematically and to analyze the worst-case scenario a central bank may be facing when keeping up a currency peg. For simplicity, we focus on the situation in which the central bank tries to prevent an over-appreciation of  the domestic currency, as it was the case during the EUR/CHF currency peg. By switching signs and replacing long with short positions, one can easily see that our results also apply to the reversed situation as found, e.g., during the above-described exit of the British pound from the ERM.
We model the system of a speculative investor versus the central bank or government as a Stackelberg equilibrium within a stochastic differential game. One player,  the central bank, enforces the currency peg, thereby generating an ever increasing risky position in foreign currency. This position will incur dramatic losses once the currency peg is abandoned. The other player is a strategic investor whose strategy is aimed at increasing the central bank's risky position, in an attempt to force the central bank to terminate of the currency peg.

Our first result identifies a central bank strategy that, for all strategies of the strategic investor,  minimizes the accumulated FX inventory. Implementing this strategy will turn the exchange rate process into a reflecting diffusion process. As a byproduct, we thus give an endogenous justification for the common approach of modeling pegged currency markets as reflecting diffusions  (for such approaches, see,  e.g., \cite{Ball-Roma98,CarrKakushadze,NeumanSchied} and the references therein). Then we formulate an optimization problem for the strategic investor, who wishes to maximize the expected inventory of the central bank minus a trading cost term, by implementing a worst-case trading strategy. We give an explicit solution to this problem and thereby establish a Stackelberg equilibrium between our two players. 

Pegged exchange rates, also called \emph{target zone models}, have previously been studied in a wide range of contexts. For the corresponding economics literature, see \cite{Krugman91,Svensson, Bertola-Caballero92, De-Jong94, Ball-Roma98}, among others. Another stream of literature studies the central bank's problem in a setup of impulse control \cite{Jeanblanc-Picque1993,Korn1997}.  For two-sided target zones, i.e., exchange rates with both upper and lower currency pegs, one can use this approach to identify optimal peg levels \cite{Mundaca-Oksendal1998, Cadenillas-Zapatero1999, Cadenillas-Zapatero2000}. Yet, a third stream of literature studied optimal portfolio liquidation problems for investors who are active in an exogenously given target zone \cite{NeumanSchied,BelakMuhle-Karbe}. Finally, an optimal central bank strategy in a two-sided target zone is derived in \cite{NeumanSchiedWengXue}, and FX options pricing in target zones is studied in \cite{CarrKakushadze}.

  \begin{figure}
\begin{center}
\includegraphics[width=12cm]{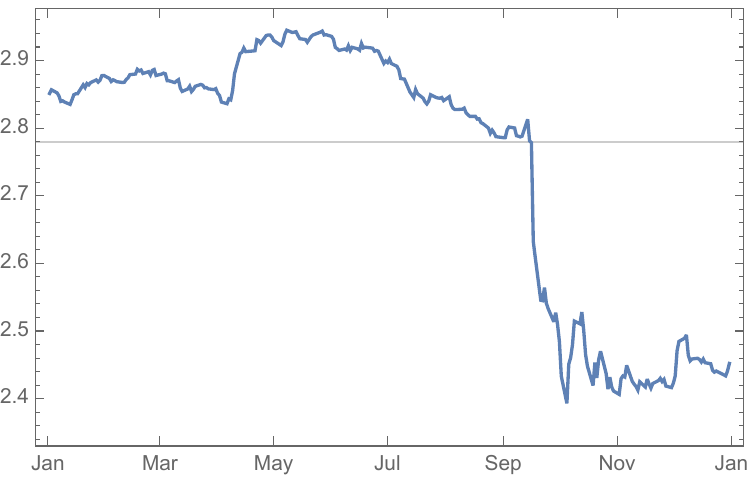}
\caption{Plot of the GBP/DEM exchange rate from  January 1, 1992  to December 31, 1992. On September 16, 1992, the British government  announced an exit from the European Exchange Rate Mechanism (ERM) in which the exchange rate was pegged at 2.78. A rapid drop of the exchange rate followed.}\label{GBP-DEM Fig}
\end{center}
\end{figure}

%
\begin{figure}
\begin{center}
\begin{overpic}[width=11cm]{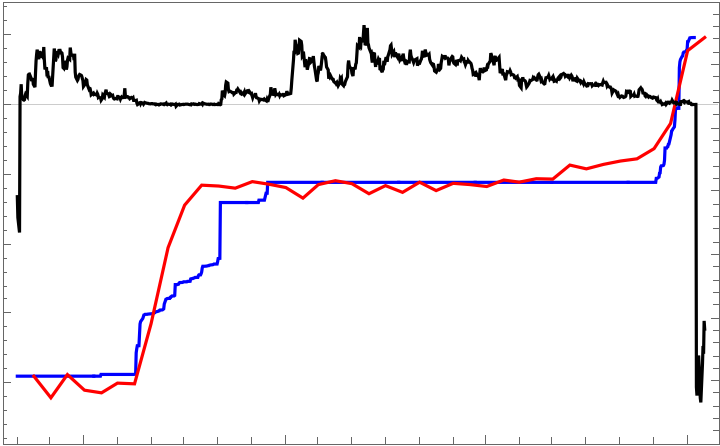}
\put(8,-3){\footnotesize 2012}
\put(36,-3){\footnotesize 2013}
\put(64,-3){\footnotesize 2014}
\put(92,-3){\footnotesize 2015}
\put(-7,8){\footnotesize 1.00}
\put(-7,17.5){\footnotesize 1.05}
\put(-7,27){\footnotesize 1.10}
\put(-7,37){\footnotesize 1.15}
\put(-7,46.5){\footnotesize 1.20}
\put(-7,56){\footnotesize 1.25}
\put(-18,61){\footnotesize EUR/CHF}
\put(101,8){\footnotesize 200}
\put(101,17){\footnotesize 250}
\put(101,25.5){\footnotesize 300}
\put(101,34.5){\footnotesize 350}
\put(101,43){\footnotesize 400}
\put(101,52){\footnotesize 450}
\put(101,63){\footnotesize FX reserves} 
\put(101,59){\footnotesize in bn EUR}
\end{overpic}
\end{center}
\caption{EUR/CHF exchange rate (black, left-hand scale)  during the time of the Swiss currency peg from September 2011 through January 2015. During this period, the exchange rate had a lower peg at 1.20.  A rapid increase of the exchange rate followed the announcement of the currency peg, whereas the rate dropped substantially after termination.
The red curve shows the Swiss FX reserves during that period. The blue curve shows the scaled empirical local time at $c=1.20$ of the exchange rate,  computed as in \eqref{emp local time eq}. It is affinely scaled so that it coincides with the Swiss FX reserves at the start and end of the  currency peg time period.
}\label{EURCHF fig}
\end{figure}

\section{Model setup and main results} \label{sec-res}

We consider the exchange rate between two currencies subject to a lower currency peg $c>0$. The exchange rate describes the price of one unit of foreign currency in terms of units of the domestic currency. The function of the lower currency peg is to keep the exchange rate above the threshold $c$ and thus to prevent on over-valuation of the domestic currency. Such currency pegs are frequently observed on financial markets.  Our reference example  is the lower EUR/CHF currency peg of 1.20 CHF, which was maintained by the Swiss National Bank (SNF) from September 6, 2011, to January 15, 2015; see Figure \ref{EURCHF fig}. By switching the roles taken by the foreign and domestic currencies, one can see that our approach will also apply to  currency pegs  where the undervaluation of the domestic currency is to be prevented. A corresponding example is s the GBP/DEM peg within the European Exchange Rate Mechanism. At least   \Cref{prop-skor} and  \Cref{local time thm} will also apply to two-sided pegs for which one of the two barriers is under attack, as these two results are formulated in a general setting that allows for an additional upper bound on the exchange rate.

We start with a very general model formulation and then will gradually add assumptions as our exposition proceeds. In the first step, we only need a single continuous trajectory $S:[0,\infty)\to[0,\infty)$, which models the exchange rate before any central bank intervention. We emphasize that, at this stage, no probabilistic or other assumptions on the dynamics of $S$ are needed. 

If the exchange rate comes too close to the threshold $c$, the central bank will have to intervene so as to defend the currency peg. The two main instruments at the disposal of the central bank are:
\begin{enumerate}
\item lowering the domestic reference interest rate;
\item generating price impact by selling the domestic currency and buying the foreign currency. 
\end{enumerate}
 Method (a) is difficult to implement and may have substantial side effects. As a matter of fact, in our reference example of the EUR/CHF currency peg, the SNB  used this instrument only once during the duration of the currency peg: On December  18, 2014, i.e., less than one month before the abandonment of the currency peg, the SNB decreased its 3-months Libor target range from  $[0.0\%,0.25\%]$ to  $[-0.75\%,-0.25\%]$.  
Although this decision made headlines by introducing a regime of strictly negative interest rates,  one can see from Figure \ref{EURCHF fig} that  it had no visible effect on the exchange rate, which increased from 1.2010 on December 17 to a mere 1.2039 on December 18. Also,  during the \lq\lq breaking of the bank of England", announcing a dramatic increase of interest rates to the level of 15\% could not keep the British pound from dropping further against the Deutsche mark \cite{Guardian}.  For this reason, we will ignore interest rates in our setup and focus exclusively on central bank intervention through the generation of permanent price impact.  

\subsection{An optimal central bank strategy}

In this section, we focus on  permanent price impact that is linear in the trade size. That is, after buying $y$ units of foreign currency, the exchange rate will be permanently increased by $\gamma y$, where $\gamma>0$ is a certain constant. The linearity of permanent price impact is a standard assumption in price impact models; for a discussion, we refer to \cite[Section 3]{GatheralSchiedSurvey} and the references therein. To keep the exchange rate above the threshold $c$, the central bank will thus seek a trading strategy whose inventory $Y_t$ in foreign currency at time $t$ is such that 
\begin{equation} \label{cond-c} 
S^Y_t:=S_t+\gamma Y_t\ge c,\qquad \text{for all $t\ge0$.}
\end{equation}
In addition, we assume that $Y_0=0$ and that $ Y_t$ is a continuous and nondecreasing function of $t$. At first glance,  the assumption of monotonicity may not be intuitive, because it might be beneficial for the central bank to unload problematic inventory when the exchange rate drifts away from the threshold $c$.  However, such a move could be difficult politically, as unloading inventory will in turn lower the exchange rate and thus  might   be perceived as acting against the initial objective or even against national interest. The latter argument will apply  in particular, but not exclusively, to an export-driven economy such as Switzerland's. As a matter of fact, the monotonicity assumption was basically satisfied during the EUR/CHF currency peg, as can been seen from Figure \ref{EURCHF fig}, where the red line represents the development of the Swiss FX reserves\footnote{The small fluctuations that can been seen in that plot may even stem from other sources than from an unloading of inventory by the SNB. They may be due to value fluctuations of reserves held in other currencies than EUR, SNB trading activity,  or from fluctuations due to international trade.}. 

 The inventory accumulated through a central bank  strategy $Y$ can be problematic. This was obviously the case during \lq\lq the breaking of the Bank of England" by the investor George Soros, albeit of course with an inverted sign: According to \cite{Guardian}, it was estimated that \lq\lq 40 per cent of Britain's foreign exchange reserves were spent in frenetic trading". For our EUR/CHF reference example, we can observe from Figure \ref{EURCHF fig} that the abandonment of the Swiss currency peg was preceded by another surge in FX inventory by about 100 billion euros. The total FX inventory stood then at 480 billion euros, an amount about equal to 70\% of the Swiss GDP.\footnote{See also the article  \emph{Why the Swiss unpegged the franc}, published on January 18, 2015 in 
The Economist.} Following the abandonment of the currency peg on January 15, 2015, the FX inventory of the Swiss National bank (SNB) lost around CHF 78 billion of its value, which amounts to about 12\% of the annual Swiss GDP; see  \cite{LleoZiemba}. 
Based on these observations, it is natural for the central bank to adopt a strategy that minimizes the accumulated inventory over all possible strategies.  The following proposition states that this goal can be achieved for an arbitrary exchange rate process by using the  Skorokhod map.

\begin{proposition}\label{prop-skor} 
Assume that $ S_0> c$. Then there exists  a unique  continuous and nondecreasing strategy $Y^*$ that satisfies \eqref{cond-c} 
 and that minimizes the central bank inventory  in the sense that $ Y^{*}_t\le Y_t$ for all $t\ge0$ and for all other continuous and nondecreasing strategies $Y$ satisfying \eqref{cond-c}. Moreover, $Y^*$ is given by \begin{equation} \label{opt-y}
 Y^{*}_{t} = \frac{1}{\gamma}\big(  \max_{0\leq r\leq t} \{c-  S_r\}\big)_{+}, \qquad t\ge0,
\end{equation}
where  $x_+=\max\{0,x\}$ for $x\in\bR$.
\end{proposition}

\begin{remark} \Cref{prop-skor} has also the following  significance: Exogenously given pegged exchange rate processes are often modeled by means of reflected diffusion processes; see, e.g., \cite{Ball-Roma98,CarrKakushadze,NeumanSchied} and the references therein. Our \Cref{prop-skor}  now shows that exactly such a model appears endogenously as solution to an optimal central bank strategy, because reflecting diffusion processes arise by applying the Skorokhod map to a certain diffusion process (see, e.g., \cite[p.~5]{Pilipenko}).
\end{remark}

\bigskip

\Cref{prop-skor}  has one drawback: The optimal response $Y^{*}$ in \eqref{opt-y} is a priori a functional of $S$. The trajectory $S$, however,  is not observable, because it describes the exchange rate fluctuation \emph{before} central bank observation. The following result shows that, in the context of a semimartingale model, the optimal response $Y^{*}_t$ can be recovered from the actual exchange rate, $S^{Y^*}$,  \emph{post} central bank intervention. We therefore now fix a filtered probability space $(\Omega,\cF,(\cF_t)_{t\ge0},P)$ for which $\cF_0$ is $P$-trivial and assume that $S$ is a continuous semimartingale on that probability space. Recall  that  $S$ satisfies the \emph{structure condition}, if its semimartingale decomposition $S_t=M_t+A_t$ into a continuous local martingale $M$ and a continuous adapted process $A$ with sample paths of locally bounded variation is such that $dA_t\ll d\<S\>_t=d\<M\>_t$ $P$-a.s. Recall also  that the local time in $c$ of any continuous semimartingale $X$ can be obtained as
\begin{equation}\label{local time def}
L^c_t(X)=\lim_{\eps\downarrow0}\frac1\eps\int_0^t\Ind{[c,c+\eps)}(  X_r)\,d\<  X\>_r.
\end{equation}
 The mathematical content of the following result  extends an analogous result  for solutions of reflecting SDEs; see, e.g., Theorem 1.3.1 in \cite{Pilipenko}. In light of \eqref{local time def}, formula \eqref{Y feedback eq} in the following proposition can be regarded as  a representation of the optimal central bank strategy $Y^*$ in singular feedback form.

\begin{theorem}\label{local time thm}If  the semimartingale $S$ satisfies the structure condition, then the optimal central bank strategy satisfies 
\begin{equation}\label{Y feedback eq}
Y^*_t=\frac1{2\gamma}L^c_t(S^{Y^*}),\qquad \text{for all $t\ge0$ $P$-a.s.}
\end{equation}
\end{theorem}

\bigskip

By \eqref{local time def}, the empirical local time of the daily EUR/CHF closing prices $(S_{t})$ should be approximately proportional to 
\begin{equation}\label{emp local time eq}
\widehat L^c_t=\frac1{0.004}\sum_{r\le t}\Ind{[c,c+0.004]}(S_r)(S_r-S_{r-1})^2,
\end{equation}
where $c=1.20$. Here the proportionality factor will depend on $\eps = 0.004$ but, as suggested by the results of \cite{bass},  also on time discretization. In \Cref{EURCHF fig}, we have used this formula to plot the empirical local time along the EUR/CHF  exchange rate and the Swiss FX reserves after applying an affine scaling  so that the scaled local time coincides with the Swiss FX reserves at the start and end of the  currency peg time period.  More precisely, the Swiss FX inventory in October 2011 was 204.848 billion EUR, and in February 2015 it had increased to 472.684 billion EUR. Thus, the blue curve in Figure \ref{EURCHF fig} is 
$$204.848+\frac{472.684-204.848}{\widehat L^c_T}\cdot \widehat L^c_t,
$$
where $T=\,$January 14, 2015. As one can see from Figure \ref{EURCHF fig}, the fit between the corresponding curve (blue in Figure \ref{EURCHF fig}) and the true FX inventory is rather striking, despite the fact that our data grids are rather coarse (daily exchange rates and monthly data for the FX inventory).
We can therefore conclude that 
$$Y_T=(472.684-204.848)\, \text{bn EUR}
=267.836\, \text{bn EUR}$$
is a good approximation to the total SNB  response during the EUR/CHF peg. 

We finally remark that the \lq local time strategy\rq\ for the central bank can be implemented in a straightforward manner. All the central bank needs to do is placing and maintaining a very large limit order at the level $c$. Such a passive strategy does not require to cross the spread, which justifies in hindsight our implicit assumption that there are no transaction costs for the central bank strategy.

\subsection{A Stackelberg equilibrium between central bank and strategic investor}

In this section, we add another player to our setup. This player could be either a large strategic investor, a group of such investors, or simply for a strong macroeconomic trend pushing the exchange rate toward the threshold $c$. The role of George Soros and other opportunistic traders during the breaking of the Bank of England in September 1992 could be an example for the type of strategic investor we have in mind, while the flight into Swiss francs during the European Sovereign Debt Crisis from 2011 onward created a strong macroeconomic trend leading to a decrease of the EUR/CHF rate and the resulting inception of the currency peg. In the sequel, we will always refer to a \lq strategic investor\rq, regardless of whether the corresponding actions are driven by a single fund or by macroeconomic circumstances. 

To model this situation, we assume henceforth that the filtered probability space $(\Om,\cF,(\cF_t)_{t\ge0},P)$ satisfies the usual conditions  and that the exchange rate process before central bank intervention is given by a Bachelier model with  drift $\gamma\int_0^tv_r\,dr$, which models the cumulative permanent price impact generated by the strategic agent's trading strategy in which $v_t\,dt$ of the foreign currency are bought at time $t$ (if $v_t$ is negative, this transaction will correspond to a sale). We assume that  $v$  is progressively measurable, and satisfies $ 
E[\int_{0}^{t}v^{2}_{s}\,ds ] <\infty$ for all $t\ge0$. 
The exchange rate dynamics before central bank intervention are thus given by
\begin{equation}\label{Sv eq}
\bar S^v_t=S_0+\sigma W_t+\gamma\int_0^tv_s\,ds,
\end{equation}
where $W$ is a standard $(\cF_t)_{t\ge0}$-Brownian motion  and $\sigma$ is a positive constant. 
The central bank, in turn, will now react to the  dynamics \eqref{Sv eq} by implementing the optimal strategy from \Cref{prop-skor}  and \Cref{local time thm}, which we will denote by $Y^v$, so as to make its dependence on $v$ explicit. The actual exchange rate process will thus be given by 
\begin{equation} \label{s-bm}
S^{Y^v}_t=\bar S^v_t+\gamma Y^v_t=S_0+\sigma W_t+\gamma\int_0^tv_s\,ds+\gamma Y^v_t,\qquad t\ge0.
\end{equation}

\begin{remark} Let us comment on some aspects of our modeling framework. First, our description of linear permanent price impact is drawn from Bertsimas and Lo \cite{BertsimasLo} and the  linear, continuous-time Almgren--Chriss model \cite{Almgren}. Second, there are several arguments in favor of the Bachelier model and against  geometric Brownian motion (GBM) in foreign exchange modeling. For instance, in the long run, the sample paths of  GBM either grow exponentially or decline to zero. While this behavior is consistent with many equity price trajectories, it is usually not observed for exchange rates. Moreover, the absolute fluctuations of a stock price process are typically proportional to the current price. This makes sense economically, because then the fluctuations of a certain amount of capital invested into the stock will depend only on the amount of capital and not on the unit price of the stock. GBM achieves this effect. For exchange rates, however, this intuition fails, and  an empirical analysis of common currency pairs reveals that the sizes of their fluctuations remain more or less independent of the exchange rate levels. Moreover, in our special situation, one may expect that the exchange rate process $S^{Y^v}$ will stay in close proximity of the barrier $c$, and so the assumption of a constant volatility in \eqref{s-bm} is natural from a modeling perspective and  not an oversimplification. As a matter of fact, Figure~\ref{EURCHF fig}
shows that the EUR/CHF exchange rate stayed within the narrow window $[1.20,1.26]$ throughout the duration of the Swiss currency peg.

\end{remark}

\medskip

Now we turn toward the optimization problem of the strategic investor.  Our goal is to identify the worst-case scenario with which the central bank has to reckon when keeping up a (one-sided) currency peg. To this end, the goal of the strategic investor will consist in maximizing the future expected inventory $E[Y^{v}_T]$ of the central bank in an attempt to force the central bank to abandon the target zone. Here, $T$ is a certain time horizon picked by the strategic investor or by the central bank. The strategic investor may sometimes have to cross the spread or even trade deeper into the order book so as to drive  the exchange rate down toward the barrier $c$. In contrast to the central bank's strategy, the investor's strategy $v$ will thus create transaction costs, which are sometimes also called  \lq\lq slippage". According to the linear Almgren--Chriss model \cite{Almgren}, this  slippage is proportional to
$
\int_0^Tv_t^2\,dt.
    $
We therefore assume that, for a certain constant $\kappa>0$, the goal of the investor is to 
\begin{equation}\label{rev-spec}
\text{maximize}\qquad E\bigg[ \gamma Y^{v}_{T}-\kappa\int_0^Tv_t^2\,dt\bigg]\qquad\text{over $v\in\cV_T $,}
\end{equation} 
where $\cV_T$ denotes the class of progressive strategies $v$ with $
E[\int_{0}^{T}v^{2}_{s}\,ds ] <\infty$ 
Our main result provides a solution to the preceding problem of stochastic optimal control in terms of a solution to a Skorokhod SDE, and thus establishes a Stackelberg equilibrium in our singular stochastic differential game between a  trader and the central bank. 

\bigskip

\begin{theorem} \label{theorem-trader} Suppose that  $S_0=z> c$ and let $\beta=\gamma^2/(2\kappa\sigma^2)$ and
\begin{equation} 
\label{V-00}
U(t,y):= \frac{1}{\beta}\log \Big(E\Big[\exp\big( \sigma \beta  L^{(y-c)/\sigma}_t( W)\big)\Big] \Big),\qquad y\in\mathbb R,\ t\ge 0,
\end{equation}
where $L^x_t(W)$ is the local time of the standard Brownian motion $W$ at level $x\in \mathbb{R}$.
Then we have 
$$
U(t,z)=\sup_{v\in\cV_t }E\Big[\,\gamma Y^{v}_{t}-\kappa\int_0^tv_s^2\,ds\,\Big]\qquad \text{for $t\ge0$.}
$$
Moreover, $U(t,z)$ is smooth on $[0,\infty)\times[c,\infty)\setminus\{(0,c)\}$ and 
when letting
\begin{equation}\label{dif inclusion}
\bar v(t,z):=\frac\gamma{2\kappa}\partial_zU (t,z),
\end{equation}
then, for each time horizon $T>0$, there exists a unique strong solution to the following  SDE with reflection,
\begin{equation}  \label{skor-sde} 
\begin{cases}
dS_{t}&=\sigma \,dW_t+\gamma \bar v(T-t,S_t)\,dt +dR_{t},\\
S_t&\ge c,\\
S_0&=z,\\
R_t&\text{is continuous, nondecreasing, and }\textstyle\int_0^T\Ind{\{S_t>c\}}\,dR_t=0.
\end{cases}
\end{equation} 
Then 
$v^*:=\bar v(T-t,S_t)$
belongs to $\cV_T$ and is an optimal strategy for the strategic investor with time horizon $T$.  Finally, the corresponding optimal central bank response \eqref{opt-y} is given by $Y^{v^*}=\frac1\gamma R$.
\end{theorem}
\begin{remark} From Formula 1.3.3 on p.~161 in~\cite{BorodinSalminen} we get a closed-form expression for $U(t,z)$ if $t>0$, 
\begin{align}\label{U explicit formula eq}
U(t,z)=\frac1\beta\log\Bigg(\mathrm{erf}\Big(\frac{z-c}{\sigma\sqrt{2t}}\Big)+e^{-\beta(z-c)+\beta^2\sigma^2t/2}\bigg[1-\mathrm{erf}\Big(\frac{z-\beta\sigma^2 t}{\sigma\sqrt{2t}}\Big)\bigg]\Bigg),
\end{align}
where $\mathrm{erf}(x)=\frac2{\sqrt\pi}\int_0^xe^{-y^2}\,dy$ is the Gaussian error function. It follows that
\begin{equation}
\bar v(t,z)=-\frac{\gamma e^{\frac{1}{2} \beta ^2 \sigma ^2 t} \left(\text{erf}\left(\frac{c+\beta  \sigma
   ^2 t-z}{\sqrt{2} \sigma  \sqrt{t}}\right)+1\right)}{2\kappa e^{\frac{1}{2} \beta ^2 \sigma ^2
   t} \left(\text{erf}\left(\frac{c+\beta  \sigma ^2 t-z}{\sqrt{2} \sigma 
   \sqrt{t}}\right)+1\right)+e^{\beta  (z-c)} \text{erf}\left(\frac{z-c}{\sqrt{2} \sigma 
   \sqrt{t}}\right)}.
\end{equation}
See Figure~\ref{Uv figure} for  plots of $U(t,z)$ and $\bar v(t,z)$. 
\end{remark}

\begin{figure}
\centering
\includegraphics[width=8cm]{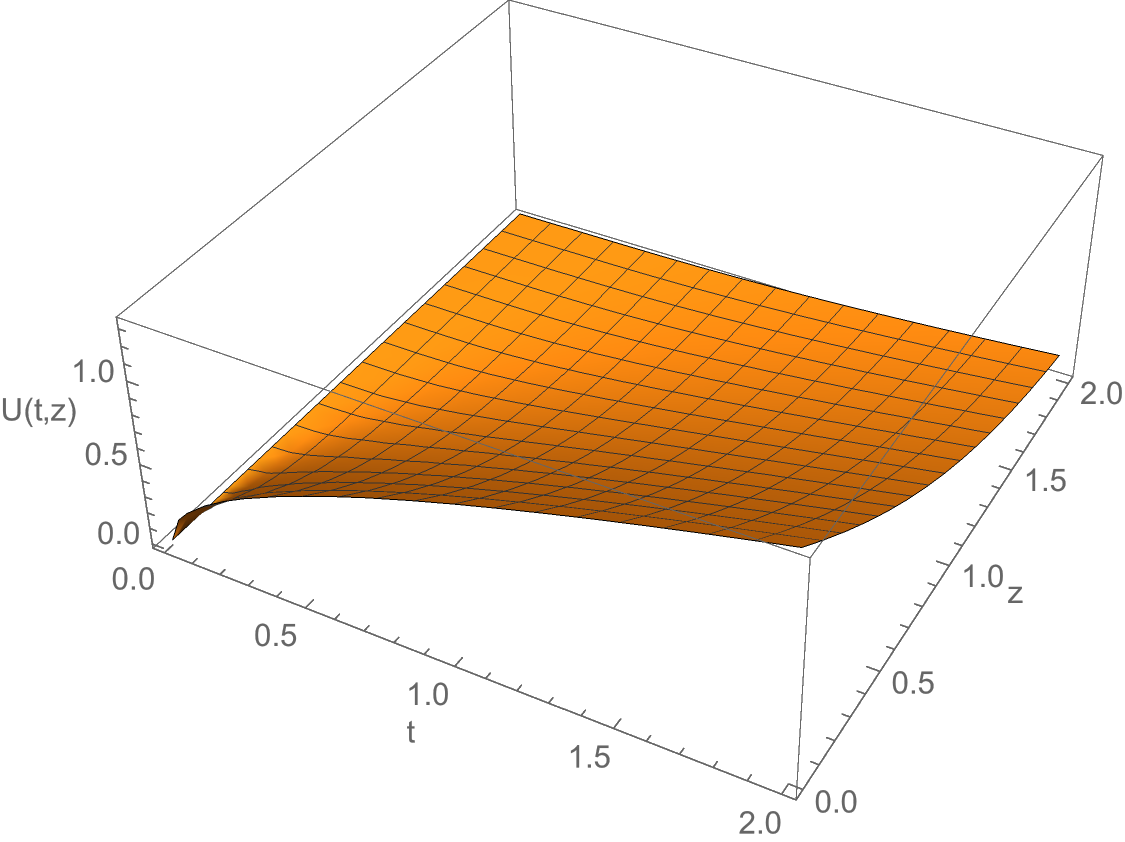}\qquad
\includegraphics[width=8cm]{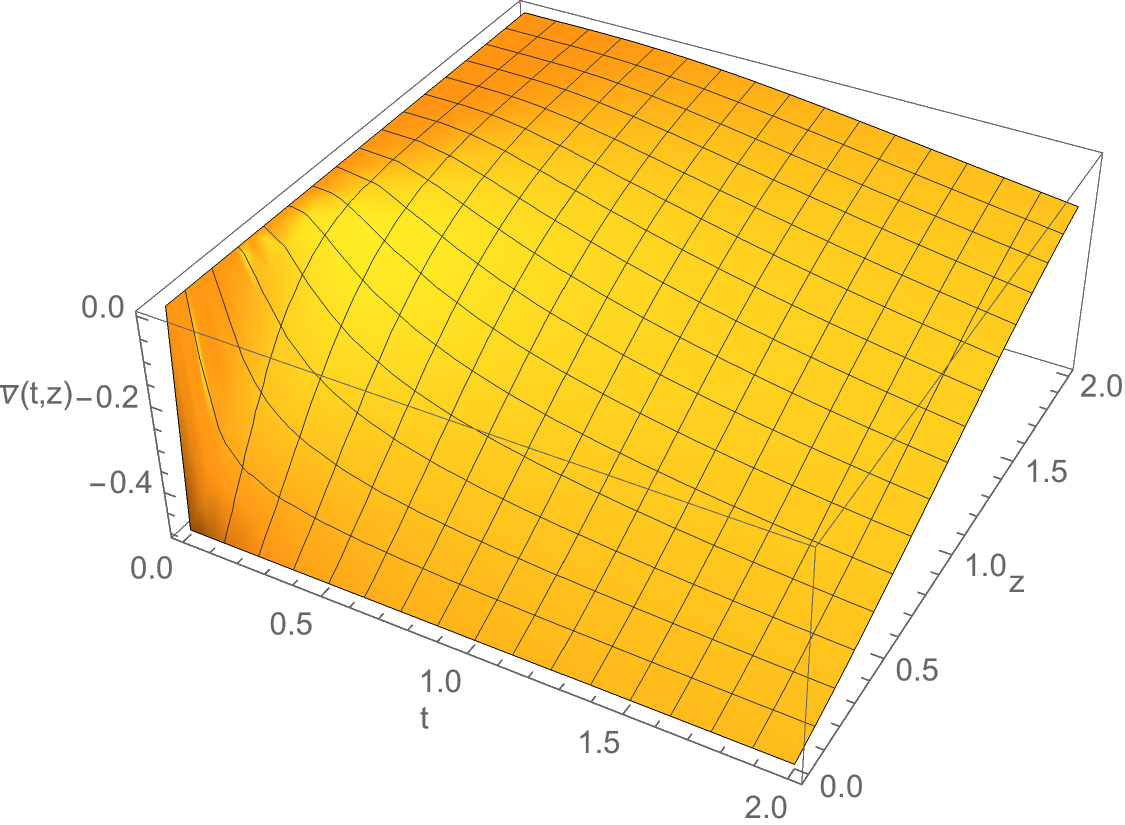}
\caption{The value function $U(t,z)$ (left) and the function $\bar v(t,z)$ (right) for $\sigma=\gamma=\kappa=1$ and $c=0$}. 
\label{Uv figure}
\end{figure}

\begin{figure}
\centering
\includegraphics[width=12cm]{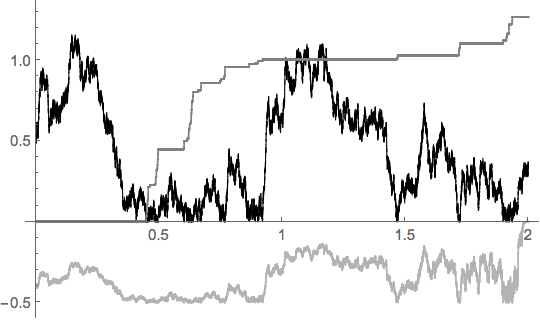}
\caption{Solution  of the reflecting SDE \eqref{skor-sde} (black), optimal central bank response, $R$ (dark gray, nondecreasing), and optimal investor strategy, $v^*$ (light gray, lower trajectory), for $\sigma=\gamma=\kappa=1$, $c=0$, and $T=2$.  }. 
\label{trajectories figure}
\end{figure}

\begin{figure}
\centering
\begin{minipage}[b]{8cm}
\includegraphics[width=8cm]{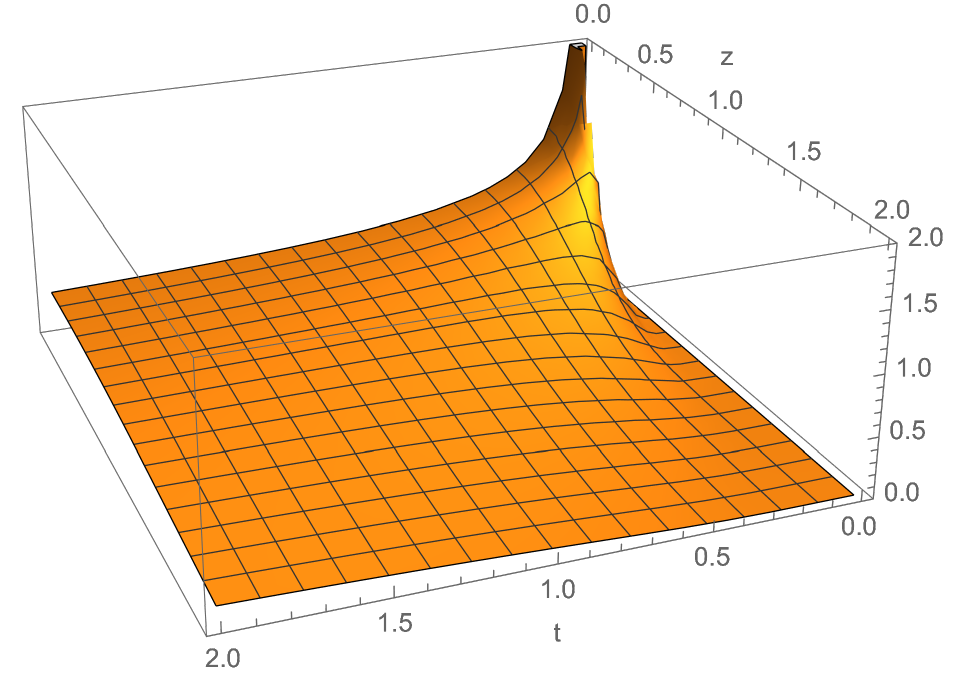}
\end{minipage}\quad 
\begin{minipage}[b]{8cm}
\includegraphics[width=8cm]{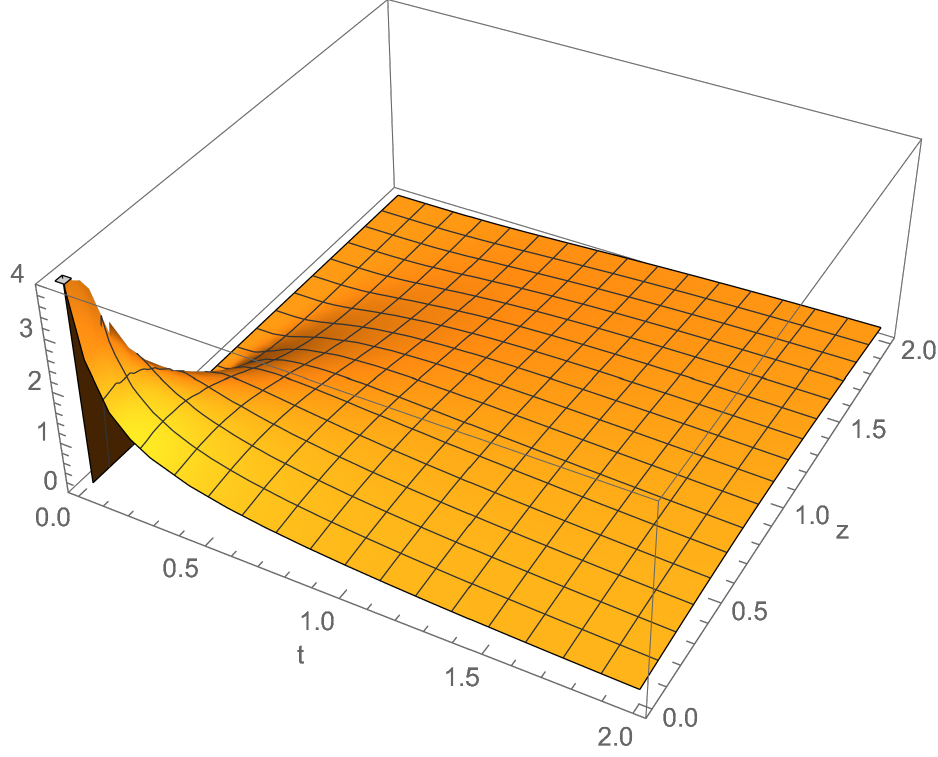}
\end{minipage}
\\
\begin{minipage}[b]{8cm}
\includegraphics[width=8cm]{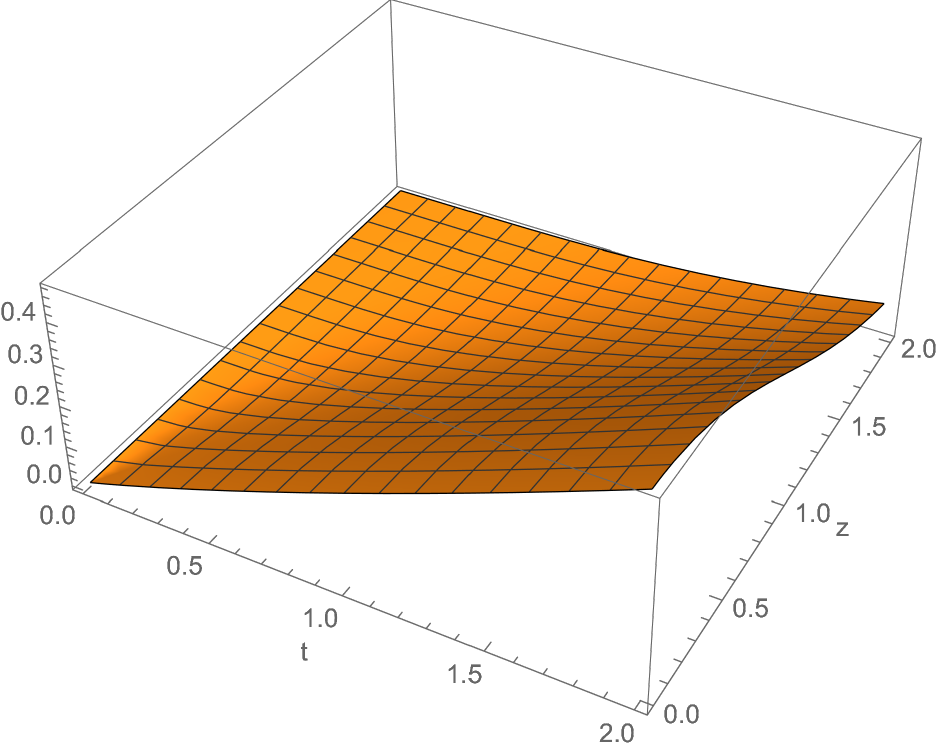}
\end{minipage}
\quad 
\begin{minipage}[b]{8cm}
\includegraphics[width=8cm]{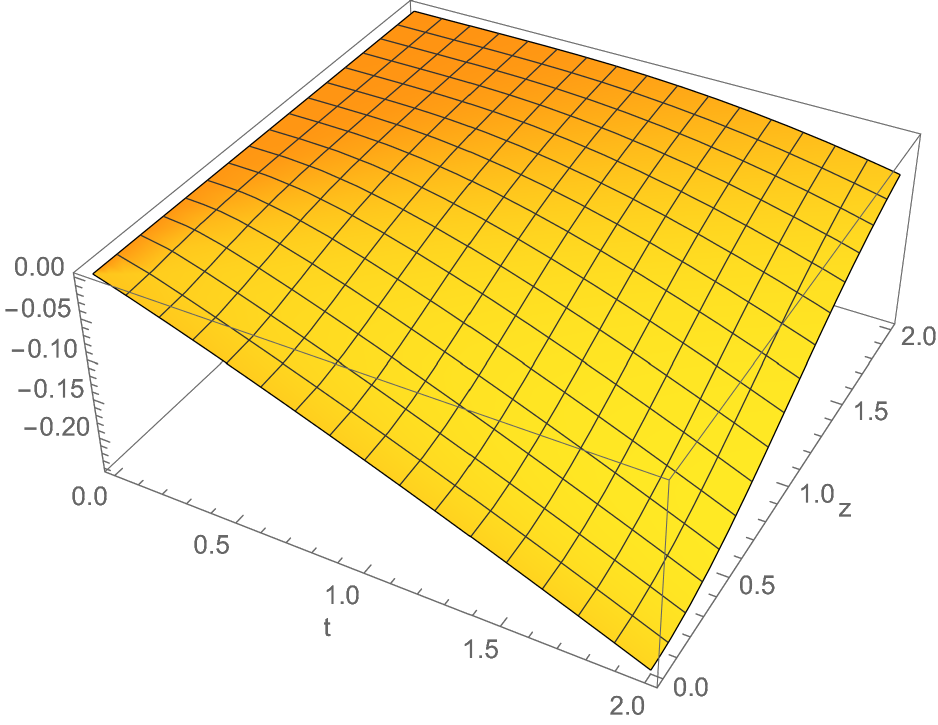}
\end{minipage}\\
\begin{minipage}[b]{8cm}
\includegraphics[width=8cm]{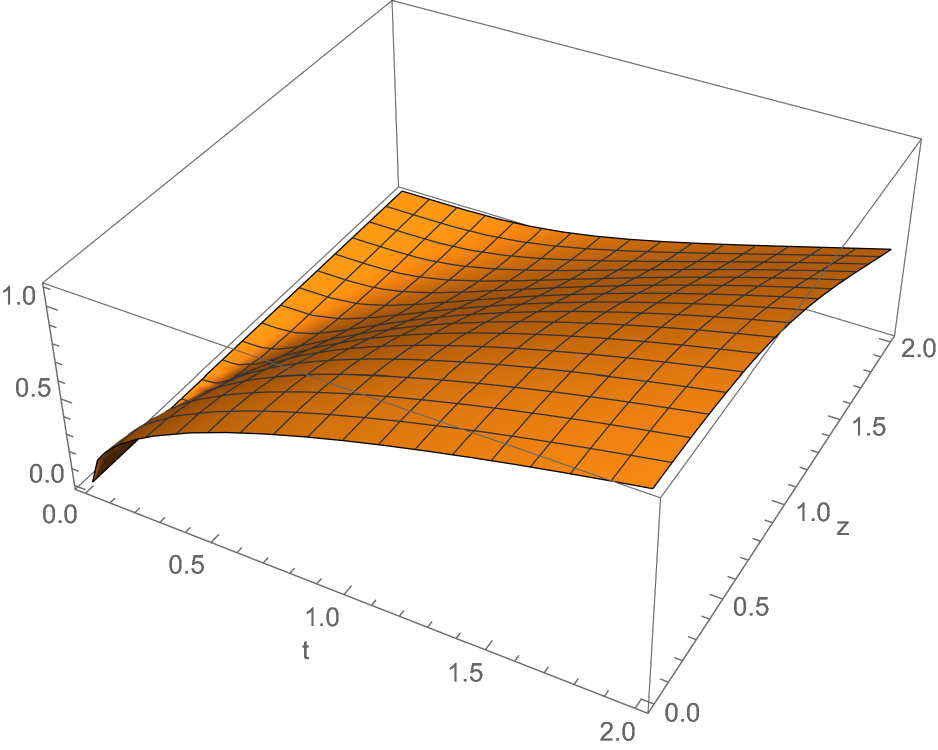}
\end{minipage}
\quad
\begin{minipage}[b]{8cm}
\includegraphics[width=8cm]{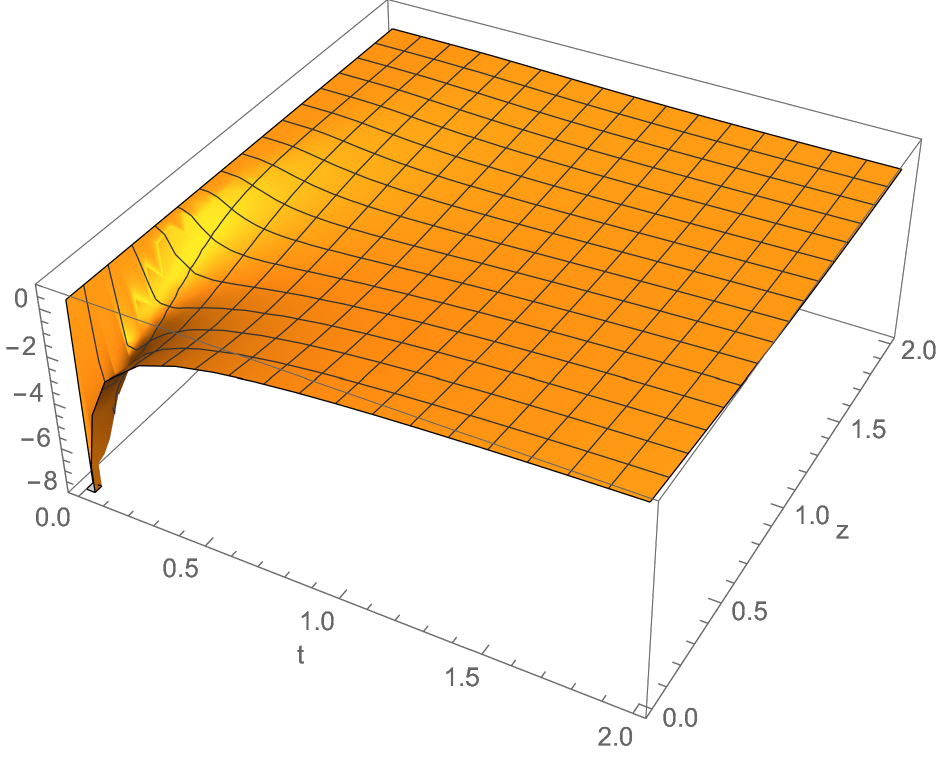}
\end{minipage}
\caption{Derivatives of the value function for $\sigma=\gamma=\kappa=1$ and $c=0$ from top left to bottom right: $\partial U/\partial t$, $\partial ^2U/\partial z^2 $, $\partial U/\partial \gamma$, $\partial U/\partial \kappa$, $\partial U/\partial \sigma$, and $\partial U/\partial c$. }
\end{figure}

\section{Proofs}

\begin{proof}[Proof of  \Cref{prop-skor}] 
Note that 
\begin{equation} \label{g-proc}
S^{Y^*}_t= S_t +\gamma Y_{t}^{*} \end{equation}
is the Skorokhod map of the continuous function $S$. Hence $S^{Y^*}_t\ge c$ for all $t$, and 
the minimality of $Y^{*}$  follows from the minimality property of the solution $(S^{Y^*},\gamma Y^{*})$ to the Skorokhod problem as stated, e.g., in Remark 6.15 in \cite{KaratzasShreve}.
 \end{proof}

\begin{proof}[Proof of \Cref{local time thm}.] 
By Tanaka's formula,  the local time of $ S^*:=S^{Y^*}$ satisfies
\begin{equation*}
\frac12 L^c_t(S^*)=( S^*_t-c)^+-(  S^*_0-c)^+-\int_0^t\Ind{\{  S^*_r>c\}}\,dS^*_r.
\end{equation*}
We have $S^*:= S+\gamma Y^{*}\ge c$  $P$-a.s. Next, the well-known properties of the Skorokhod map imply that the measure $dY^*_t$ is $P$-a.s.~concentrated on the set $\{r\,|\,S^*_r=c\}$. Hence,
\begin{align*}
\frac12 L^c_t(S^*)&=  S^*_t- S^*_0-\int_0^t\Ind{\{  S^*_r>c\}}\,dS^*_r\\
&=S^*_t- S^*_0 -\int_0^t\,dS^*_r +\int_0^t\Ind{\{  S^*_r=c\}}\,dS^*_r\\
&=\int_0^t\Ind{\{  S^*_r=c\}}\,dS^*_r\\
&=\gamma \int_0^t\Ind{\{  S^*_r=c\}}\,dY^*_r+\int_0^t\Ind{\{  S^*_r=c\}}\,d S_r\\&=\gamma Y^*_t+\int_0^t\Ind{\{  S^*_r=c\}}\,dM_r+\int_0^t\Ind{\{  S^*_r=c\}}\,dA_r.
\end{align*}
Here, $ S=M+A$ denotes the semimartingale decomposition of $ S$. The semimartingale decomposition of $S^*$ is 
$$S^*=M+(A+\gamma Y^*),$$
 and so $\<S^*\>=\< S\>=\<M\>$. Therefore, the occupation time formula for semimartingale local times gives that $P$-a.s.
$$0=\int_{-\infty}^\infty \Ind{\{  x=c\}} L^x_t(S^*)\,dx=\int_0^t\Ind{\{  S^{*}_r=c\}}\,d\<S^*\>_r=\int_0^t\big(\Ind{\{  S^{*}_r=c\}}\big)^2\,d\<M\>_r.
$$
The It\^o isometry hence implies that 
$\int_0^t\Ind{\{  S^*_r=c\}}\,dM_r=0$ $P$-a.s. Moreover, our assumption that $ S$ satisfies the structure condition yields  that $\int_0^t\Ind{\{  S^{*}_r=c\}}\,dA_r=0$. Thus, we must have $L^c_t(S^*)=2\gamma Y^*_t$.\end{proof}

\subsection{ Proof of Theorem \ref{theorem-trader}} 

\begin{proposition} \label{prop-sing-pde}
The function $U$ satisfies the following partial differential equation,
\begin{equation}\label{measure HJB}
\partial_t U(t,z)=\frac{\sigma^2}2\partial_{zz}U(t,z)+\frac{\gamma^2}{4\kappa}\big(\partial_zU^{}(t,z)\big)^2,\qquad\text{in $(0,\infty)\times[c,\infty)$,}
\end{equation}
with boundary and initial conditions 
\begin{equation}\label{measure HJB boundary condition}
\partial_z U(t,c)=-1,\quad t\in(0,T], \qquad U(0,z) =0, \quad z \geq c.
\end{equation}
\end{proposition}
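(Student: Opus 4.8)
The plan is to pass to the limit $\eps\downarrow0$ in the \emph{linear} parabolic equation satisfied by the approximating functions and then to transfer the resulting equation back to $U$ via the Cole--Hopf change of variables $U=\tfrac1\beta\log h$. Set $h:=e^{\beta U}$ and $h^\eps:=e^{\beta U^\eps}$. By \eqref{V-0} we have $h(t,z)=E[\exp(\beta L^{c-z}_t(\sigma W))]$, while Proposition~\ref{prop-approx} and its proof identify $h^\eps(t,z)=E[\exp(\beta\int_0^tG_\eps(z+\sigma W_r)\,dr)]$ as the $\mathcal C^{1,2}([0,\infty)\times\mathbb R)$ solution of $\partial_th^\eps=\tfrac{\sigma^2}2\partial_{zz}h^\eps+\beta G_\eps h^\eps$ with $h^\eps(0,\cdot)=1$ and $\partial_zh^\eps(t,c)=0$. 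Since $U$ and $U^\eps$ are bounded on $[0,T]\times\mathbb R$ by Lemma~\ref{lem-local-con}(b), Proposition~\ref{prop-v-0} upgrades to $h^\eps\to h$ uniformly on $[0,T]\times\mathbb R$; in particular $U$ and $h$ are continuous there, which is all the boundary estimates below need from the parabolic boundary $\{t=0\}\cup\{z=c\}$.

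\emph{The interior equation.} On any compact $K\subset(0,T]\times(c,\infty)$ the Gaussian $G_\eps$ and all of its space--time derivatives converge to $0$ uniformly, since $e^{-(z-c)^2/(2\eps)}$ dominates every negative power of $\eps$ once $|z-c|$ is bounded below. Hence the source $\beta G_\eps h^\eps$ tends to $0$ in $C^k(K)$ for every $k$, and interior parabolic Schauder estimates on a slightly larger cylinder bound $\|h^\eps\|_{C^{1+\alpha/2,\,2+\alpha}(K)}$ uniformly in $\eps$. Combined with the uniform convergence $h^\eps\to h$, this gives $h\in\mathcal C^{1,2}((0,T]\times(c,\infty))$, convergence $h^\eps\to h$ in $C^{1,2}_{\mathrm{loc}}$, and, in the limit, $\partial_th=\tfrac{\sigma^2}2\partial_{zz}h$ on $(0,T]\times(c,\infty)$. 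Since $h\ge1$, the identity $\partial_{zz}h/h=\partial_{zz}(\log h)+(\partial_z\log h)^2$ together with $\log h=\beta U$ turns this into $\partial_tU=\tfrac{\sigma^2}2\partial_{zz}U+\tfrac{\sigma^2\beta}2(\partial_zU)^2$, which is \eqref{measure HJB} because $\tfrac{\sigma^2\beta}2=\tfrac{\gamma^2}{4\kappa}$.

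\emph{The boundary condition, and the main obstacle.} This is the delicate point. The regularization has a boundary layer of width $O(\sqrt\eps)$ at $z=c$, so the Neumann condition $\partial_zh^\eps(t,c)=0$ does \emph{not} pass to the limit (the limiting slope $\partial_zU(t,c)$ equals $-1$, not $0$); the correct condition must be read off from the concentration of $G_\eps$ at $c$. I would integrate the $h^\eps$-equation in $z$ over $[c,c+\delta]$: using $\partial_zh^\eps(t,c)=0$, the convergence $\int_c^{c+\delta}G_\eps(z)\,dz\to\tfrac12$, the uniform convergence $h^\eps\to h$, and the interior convergence of $\partial_zh^\eps$ at $z=c+\delta$, one obtains in the limit an identity of the shape $\partial_t\!\int_c^{c+\delta}h(t,z)\,dz=\tfrac{\sigma^2}2\,\partial_zh(t,c+\delta)+(\mathrm{const})\,h(t,c)$. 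A boundary Schauder estimate for the heat equation on $(0,T]\times[c,c+1]$, with lateral datum $h(\cdot,c)$ (explicitly computable and smooth, being the exponential moment of the local time of $\sigma W$ at its starting point), shows that $\partial_zh$ extends continuously up to $z=c$; letting $\delta\downarrow0$ then forces $\partial_zh(t,c+)=-(\mathrm{const})\,h(t,c)$, i.e.\ $\partial_zU(t,c)=-1$, which is \eqref{measure HJB boundary condition}, and at the same time supplies the $\mathcal C^{1,2}$-regularity of $U$ up to $z=c$ needed for \eqref{measure HJB} to hold there. A more conceptual alternative is to argue probabilistically: by the theorems of Tanaka and L\'evy, $|z-c+\sigma W|$ is a reflecting Brownian motion $Y$ on $[0,\infty)$ whose boundary local time is the local time of $\sigma W$ at level $c-z$, so $h(t,c+y)=E_y[\exp(\beta L^0_t(Y))]$; applying the generalized It\^o formula to $r\mapsto h(t-r,c+Y_r)\exp(\beta L^0_r(Y))$ with the Skorokhod decomposition $dY_r=\sigma\,dB_r+dL^0_r(Y)$, the $dr$-terms vanish by the interior PDE, the $dB$-term is a martingale, and the martingale property of the product (needed so that taking expectations recovers $h$ from $h(0,\cdot)=1$) forces the coefficient $\partial_zh(t,c+)+(\mathrm{const})\,h(t,c)$ of $dL^0_r(Y)$ to vanish. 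Either way, both statements may also be verified directly from the closed form \eqref{U explicit formula eq}. The real difficulty lies entirely at $z=c$: the interior PDE is a soft consequence of the $\eps$-approximation and interior parabolic regularity, whereas the boundary slope $-1$ and the regularity up to $z=c$ require the separate argument above.
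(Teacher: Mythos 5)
Your proposal takes a genuinely different route from the paper. The paper's proof is purely computational: it differentiates the closed form \eqref{U explicit formula eq} directly, observes that $\psi$ solves $\partial_t\psi=\tfrac{\sigma^2}2\partial_{xx}\psi$ with $\partial_x\psi(t,0)=-\beta\psi(t,0)$, and reads off \eqref{measure HJB}--\eqref{measure HJB boundary condition} from $U=\tfrac1\beta\log\psi(\cdot,\cdot-c)$. You instead try to pass to the $\eps\downarrow0$ limit in the linear equations solved by $h^\eps$, which requires both interior parabolic regularity and a boundary-layer analysis. Your interior argument (interior Schauder estimates on the source $\beta G_\eps h^\eps$, $C^{1,2}_{\mathrm{loc}}$ compactness, then Cole--Hopf to turn $\partial_th=\tfrac{\sigma^2}2\partial_{zz}h$ into the Burgers-type PDE for $U$) is sound and correctly uses $\tfrac{\sigma^2\beta}{2}=\tfrac{\gamma^2}{4\kappa}$.

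The gap is in the boundary condition, and it is not cosmetic: the whole content of \eqref{measure HJB boundary condition} is the specific value $-1$, and you leave the decisive coefficient as ``$(\mathrm{const})$'' and then simply assert it equals $-1$. If you run your integration argument carefully, integrating $\partial_th^\eps=\tfrac{\sigma^2}{2}\partial_{zz}h^\eps+\beta G_\eps h^\eps$ over $[c,c+\delta]$ and using $\partial_zh^\eps(t,c)=0$ and $\int_c^{c+\delta}G_\eps\to\tfrac12$, you obtain in the limit $\partial_t\int_c^{c+\delta}h\,dz=\tfrac{\sigma^2}{2}\partial_zh(t,c+\delta)+\tfrac{\beta}{2}h(t,c)$ and hence, after $\delta\downarrow0$, $\partial_zh(t,c+)=-\tfrac{\beta}{\sigma^2}h(t,c)$, i.e.\ $\partial_zU(t,c)=-\tfrac1{\sigma^2}$, not $-1$. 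The mismatch traces to a normalization subtlety you are inheriting blindly from Lemma~\ref{lem-local-con} and Proposition~\ref{prop-v-0}: the occupation-time identity $\int_0^tG_\eps(z+\sigma W_u)\,du=\int G_\eps(z+x)L^x_t(\sigma W)\,dx$ used there is missing a factor $1/\sigma^2$ once one fixes a single local-time convention, so the function $\lim_\eps h^\eps$ need not coincide with $e^{\beta U}$ with $U$ from \eqref{V-00}. Your alternative probabilistic sketch has a related issue: $|z-c+\sigma W|$ has Skorokhod decomposition $dY_r=\sigma\,dB_r+d\ell_r$ where $\ell$ is the Tanaka local time of $z-c+\sigma W$ at $0$, and with the paper's one-sided $\tfrac1\eps$ convention $L^0(Y)=2\ell$, so writing $dY_r=\sigma\,dB_r+dL^0_r(Y)$ and feeding $\exp(\beta L^0_r(Y))$ into It\^o again produces an undetermined constant. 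The It\^o/RBM argument does give $\partial_zh(t,c)=-\beta h(t,c)$ (hence $-1$) if one works throughout with the Tanaka local time $\ell$ of $z-c+\sigma W$ at $0$ and $h=E[e^{\beta\ell_t}]$ from \eqref{V-0}, but then one is no longer using the $\eps\to0$ machinery at all. In short: either pin down the exact normalization of the local time in every occurrence and check the constant, or do what the paper does and verify the boundary slope directly from the explicit formula \eqref{U explicit formula eq}. As written, your proof does not establish $\partial_zU(t,c)=-1$.
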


\begin{proof}
Let 
$$\psi(t,x):=\mathrm{erf}\Big(\frac{x}{\sigma\sqrt{2t}}\Big)+e^{-\beta x+\beta^2\sigma^2t/2}\bigg[1-\mathrm{erf}\Big(\frac{x-\beta\sigma^2 t}{\sigma\sqrt{2t}}\Big)\bigg].
$$
For $t>0$, calculations show that
\begin{align*}
\partial_t\psi(t,x)&=\frac{\beta  \sigma e^{-\frac{x^2}{2 \sigma ^2
   t}}}{\sqrt{2\pi t}} +\frac{1}{2} \beta^2  \sigma^2  e^{-\beta x+\beta^2\sigma^2t/2}\bigg[1-\mathrm{erf}\Big(\frac{x-\beta\sigma^2 t}{\sigma\sqrt{2t}}\Big)\bigg],\\
\partial_x\psi(t,x)&=-\beta e^{-\beta x+\beta^2\sigma^2 t/2}\bigg[1-\mathrm{erf}\Big(\frac{x-\beta\sigma^2 t}{\sigma\sqrt{2t}}\Big)\bigg],\\
\partial_{xx}\psi(t,x)&=\frac2{\sigma^2}\partial_t\psi(t,x).
\end{align*}
From (\ref{U explicit formula eq}) we have $U(t,z)=\frac1{\beta}\log\psi(t,z-c)$. It follows that 
\begin{align}\label{first U derivatives}
\partial_tU(t,z)=\frac{\partial_t\psi(t,z-c)}{\beta \psi(t,z-c)},\quad
\partial_z U(t,z)=\frac{\partial_x\psi(t,z-c)}{\beta \psi(t,z-c)}.
\end{align}
In particular, 
$$\partial_z U(t,c)=\frac{\partial_x\psi(t,0)}{\beta \psi(t,0)}=-1.
$$
Next, the second $z$-derivative of $U$ corresponds to
\begin{equation*}
\begin{split}
\partial_{zz}U(t,z)&=\frac{\partial_{xx}\psi(t,z-c)}{\beta \psi(t,z-c)}-\beta\bigg(\frac{\partial_x\psi(t,z-c)}{ \beta\psi(t,z-c)}\bigg)^2.
\end{split}
\end{equation*}
Plugging everything together yields the assertion.
\end{proof}

Now we are ready to prove Theorem~\ref{theorem-trader}. 

\paragraph{Proof of Theorem~\ref{theorem-trader}}
Recall that $S_0=z>c$ and define $V(t,z)$ to be the trader's value function from (\ref{rev-spec}), that is 
$$
V(t,z):=\sup_{v\in\cV_t }E\Big[\, \gamma Y^{v}_{t}-\kappa\int_0^tv_s^2\,ds\,\Big].
$$
In a first step, we show that $V(t,z)$ is equal to the right-hand side of \eqref{V-00}. To this end, recall that $\beta=\gamma^2/(2\kappa\sigma^2)$ and use (\ref{s-bm}) and (\ref{opt-y}) to get 
$$
\begin{aligned} 
-\beta V(t,z)&=\inf_{v\in\cV_t }E_{}\Big[\, - \beta  \gamma Y^{v}_{t}+\beta\kappa \int_0^tv_s^2\,ds\,\Big] \\
&=\inf_{v\in\cV_t }E_{}\Big[\,- \beta \big(  \max_{0\leq s\leq t} \{c- \bar S^{v}_t\}\big)_{+}+\frac{\gamma^2}{2\sigma^2}\int_0^tv_s^2\,ds\,\Big] \\ 
&= \inf_{v\in\cV_t }E_{}\Big[\,-\beta \sigma \Big(  \max_{0\leq s\leq t}\Big \{\frac{c- z}{\sigma} - W_{s}-  \int_{0}^{s}\big(\frac{\gamma}{\sigma} v_{r}\big)dr\Big\}\Big)_{+}+\frac{1}{2}\int_0^t\big(\frac{\gamma}{\sigma} v_s\big)^2\,ds\,\Big] \\
&= \inf_{v\in\cV_t }E_{}\Big[\,-\beta \sigma \Big(  \max_{0\leq s\leq t}\Big \{\frac{c- z}{\sigma} - W_{s}-  \int_{0}^{s}v_{r}dr\Big\}\Big)_{+}+\frac{1}{2}\int_0^t v^{2}_{s}\,ds\,\Big], 
\end{aligned} 
$$
where we used the fact that $v\in \cV_t$ if and only of $\frac{\gamma}{\sigma} v \in  \cV_t $ in the last equality. 
Define 
$$
F\Big(W_{s} +\int_{0}^{s}v_{r}dr\Big) =-\beta \sigma\Big(  \max_{0\leq s\leq t}\Big \{\frac{c- z}{\sigma} - W_{s}-  \int_{0}^{s}  v_{r} dr\Big\}\Big)_{+}.
$$
Clearly the functional $F(\cdot)$ is bounded from above by $0$, so we can use the Boue-Dupuis variational formula (Theorem 5.1 from \cite{Boue:1998aa}) to get that the trader's value function is given by
\begin{equation*} 
V(t,z)= \frac{1}{\beta} \log \Big(E\Big[\exp\Big(\Big(  \beta \sigma \max_{0\leq s\leq t} \Big\{\frac{c- z}{\sigma}- W_{s}\Big\}\Big)_{+}\Big)\Big] \Big),\qquad z\ge 0.
\end{equation*}
Since $ \big( \max_{0\leq s\leq t} \Big\{\frac{c- z}{\sigma}- W_{s}\Big\}\big)_{+}$ and $L^{(c- z)/\sigma}_{t}(W)$ have the same law under $P$, we get (\ref{V-00}). 
 
We write $S^{v,z}$ for the reflected semimartingale in \eqref{g-proc}, with $S_0=z > c$ and with a given   $v\in \cV_t $. It\^o's formula and an application of Proposition \ref{prop-sing-pde}
 give that for any $\eps\in(0,t)$ 
\begin{align}  
U (\eps,S^{v,z}_{t-\eps}) &= U (t,z) 
+\sigma\int_0^{t-\eps}\partial _z U (t-r,S^{v,z}_{r})\,dW_{r} +\gamma\int_0^{t-\eps}\partial _z U (t-r,c)\,dY^{v}_{r}\nonumber\\
&\quad+\int_0^{t-\eps}\Big(-\partial_t U(t-r,S^{v,z}_{r}) +\gamma v(r) \partial_zU (t-r,S^{v,z}_{r})+\frac{\sigma^2}2 \partial_{zz}U(t-r,S^{v,z}_{r})\Big)\,dr\nonumber\\
&= U (t,z) 
+\sigma\int_0^{t-\eps}\partial _z U (t-r,S^{v,z}_{r})\,dW_{r} -\gamma Y^{v}_t\nonumber\\
&\quad+\int_0^{t-\eps}\Big( \gamma v_r\partial_zU (t-r,S^{v,z}_r) -\frac{\gamma^2}{4\kappa} \big(\partial_zU(t-r,S^{v,z}_{r})\big)^{2}\Big)\,dr\nonumber\\
&\le   U (t,z) 
+\sigma\int_0^{t-\eps}\partial _z U (t-r,S^{v,z}_{r})\,dW_{r} -\gamma Y^{v}_t+\kappa\int_0^{t-\eps} v_r^2\,dr .\label{verification arg eq}
  \end{align}
 It follows from \eqref{first U derivatives} that $\partial_z U $ is bounded, and so $\sigma\int_0^{t-\eps}\partial _z  U (t-r,S^{v,z}_{r})\,dW_{r}$ is a true $P$-martingale. Taking expectations hence gives
 $$U (t,z)\ge E\big[ U (\eps,S^{v,z}_{t-\eps}) \big]+E\big[ \,\gamma Y^{v}_{t-\eps} \big]-E\Big[\kappa\int_0^{t-\eps} v_r^2\,dr\,\Big].
 $$
  Using that  $U (\eps,\cdot)\to0$ boundedly as $\eps\downarrow 0$ and monotone convergence for the two rightmost expectations yields
  \begin{equation}  \label{rf100}
\begin{aligned}
U (t,z)\ge E\Big[\,\gamma Y^{v}_t-\kappa\int_0^t v_r^2\,dr\,\Big].
\end{aligned}
\end{equation}
Taking the supremum over $v\in\cV_t $ shows the inequality \lq\lq$\ge$\rq\rq\ in  Theorem~\ref{theorem-trader}. 

Now we consider the reflecting SDE \eqref{skor-sde}. To this end, we note first that, on any domain $[\eps,T]\times[c,\infty]$ with $\eps\in(0,T)$, the function $\bar v(t,z)$ is bounded and satisfies a global Lipschitz condition in $z$.  Next, we define 
$$\Gamma_t f:=f(t)+\max_{0\le r\le t}\{c-f(r)\},\qquad f\in C[0,T].
$$
Since $\Gamma$ is Lipschitz continuous with respect to the supremum norm, there exists for any $\eps\in(0,T)$ a unique strong solution to the SDE
\begin{equation}\label{bar S SDE}
d\bar S_t=\sigma \,dW_t+\gamma\bar v(T-t,\Gamma_t \bar S)\,dt,\qquad \bar S_0=z,
\end{equation}
on the time interval $[0,T-\eps]$. Uniqueness implies that this solution extends to $[0,T)$, and since $\bar v$ is bounded and non-positive, the limit $\bar S_T:=\lim_{t\uparrow T}\bar S_t$ exists and is finite. That is, the SDE \eqref{bar S SDE} admits a unique strong solution on $[0,T]$. When letting $v^*_t:=\bar v(T-t,\Gamma_t\bar S)$, then $v^*\in\cV_T$. Moreover,
$$S_t:=\bar S_t+\big(\max_{0\le r\le t}\{c- \bar S_r\}\big)_+=\Gamma_t\bar S
$$
solves the reflecting SDE \eqref{skor-sde} for $R_t=\big(\max_{0\le r\le t}\{c-\bar S_r\}\big)_+$. Finally, with this choice of $v^*$, we have an equality in \eqref{verification arg eq} and hence in \eqref{rf100}. This concludes the proof.
\qed

\section*{Data availability statement}

The data used in this article is publicly available from the Bank of England and the Swiss National Bank. The authors will be happy to share the used data upon request.

\section*{Acknowledgments}
We are very grateful to Amarjit Budhiraja whose numerous useful comments enabled us to significantly improve this paper.
We are also thankful to anonymous referees for careful reading of the manuscript and for a number of useful comments and suggestions that significantly improved this paper.

\bibliographystyle{abbrv}

\end{document}